\newcommand{\defproblem}[3]{
  \vspace{1mm}
  \noindent\fbox{
    \begin{minipage}{0.96\textwidth}
      \begin{tabular*}{\textwidth}{@{\extracolsep{\fill}}lr} #1 \\ \end{tabular*}
      {\bf{Input:}} #2  \\
      {\bf{Task:}} #3
    \end{minipage}
  }
  \vspace{1mm}
}
\newcommand{\SM}{\textsc{Stable Matching}\xspace}
\newcommand{\DSM}{\textsc{Disjoint Stable Matchings}\xspace}
\newcommand{\Men}{\ensuremath{\mathcal{M}}\xspace}
\newcommand{\Women}{\ensuremath{\mathcal{W}}\xspace}
\newcommand{\EGS}{\textsc{GS-Extended}\xspace}
\begin{document}

\title{Disjoint Stable Matchings in Linear Time}

\author{Aadityan Ganesh\inst{1} \and
Vishwa Prakash HV\inst{1}\and
Prajakta Nimbhorkar\inst{1,2} \and
Geevarghese Philip\inst{1,2}
}
\authorrunning{A. Ganesh and P. Nimbhorkar and G. Philip and V. Prakash HV}

\institute{Chennai Mathematical Institute, India \and
UMI ReLaX\\}
\maketitle              
\begin{abstract}
  We show that given a \SM instance \(G\) 
  as input, we can find a \emph{largest collection} of pairwise edge-disjoint
  \emph{stable matchings} of \(G\) in time linear in the input
  size. 
  This extends two classical results:
  \begin{enumerate}
  \item The Gale-Shapley algorithm, which can find at most two (``extreme'')
    pairwise edge-disjoint stable matchings of \(G\) in linear time, and
  \item The polynomial-time algorithm for finding a largest collection of
    pairwise edge-disjoint \emph{perfect matchings} (without the stability
    requirement) in a bipartite graph, obtained by combining K\"{o}nig's
    characterization with Tutte's \(f\)-factor algorithm.
  \end{enumerate}

  Moreover, we also give an algorithm to enumerate all maximum-length chains of
  disjoint stable matchings in the lattice of stable matchings of a given
  instance. This algorithm takes time polynomial in the input size for
  enumerating each chain. We also derive the expected number of such chains in a
  random instance of \SM.

  \keywords{Stable Matching  \and Disjoint Matchings.}

\end{abstract}

\section{Introduction}\label{sec:introduction}
All our graphs are finite, undirected, and simple. We use \(V(G), E(G)\) to
denote the vertex and edge sets of a graph \(G\), respectively. A \emph{matching}
in a graph \(G\) is any subset \(M \subseteq E(G)\) of edges of \(G\) such that
no two edges in \(M\) have a common end-vertex. An input instance of the \SM
problem contains a bipartite graph \(G\) with the vertex
partition \(V(G) = \Men \uplus \Women\) where the two sides \(\Men, \Women\) are
customarily called ``the set of men'' and ``the set of women'', respectively.
Each woman has a strictly ordered preference list containing her neighbors---a
woman prefers to be matched with a man who comes earlier in her list, than with
one who comes later---and each man similarly has a strictly ordered preference
list containing all his neighbors. 
\begin{definition}[Blocking pair]
A man-woman pair \((m, w) \in E\) is said to be a 
\emph{blocking pair} with respect to a matching \(M\) of \(G\) if both $m$ and $w$ 
prefer each other over their matched partner in $M$. 
\end{definition}

\begin{definition}[Stable matching]
A matching \(M\) of \(G\) is said to be \emph{stable} if there is no
blocking pair in $G$ with respect to $M$.
\end{definition}
A matching $M$ that is not stable is said to be {\em unstable}. The \SM instance
consists of a bipartite graph \(G\) with vertex partition \(\Men \uplus \Women\)
and the associated preference lists. The \SM\ problem involves deciding if \(G\)
has a stable matching, and outputting one if it exists.

The \SM problem models a number of real-world applications where two disjoint
sets of entities---fresh graduates and intern positions; students and hostel
rooms; internet users and CDN servers; and so on---need to be matched
based on strict preferences. Gale and Shapley famously proved that \emph{every}
instance of \SM indeed has a stable matching, and that one such matching can be
found in linear time~\cite{doi:10.1080/00029890.1962.11989827}. The
Gale-Shapley algorithm for \SM follows a simple---almost simplistic---greedy
strategy: in turn, each unmatched man proposes to the most preferred woman who
has not rejected him so far, and each woman holds on to the best proposal (as
per her preference) that she has got so far. Gale and Shapley proved that this
algorithm invariably finds a stable matching, which is said to be a
\emph{man-optimal} stable matching. Of course, the algorithm also works if the
women do the proposing; a stable matching found this way is said to be
\emph{woman-optimal}.

It is not difficult to come up with instances of \SM where the man-optimal and
women-optimal stable matchings are identical, as also instances where they
differ. A rich theory about the combinatorial structure of stable matchings has
been developed over the years. In particular, it is known that the set of all
stable matchings of a \SM instance forms a \emph{distributive lattice} under a
certain natural partial order, and that the woman-optimal and man-optimal stable
matchings form the maximum and minimum elements of this lattice. It follows that
each instance has exactly one man-optimal stable matching and one woman-optimal
stable matching, and that if these two matchings are identical, then the
instance has exactly one stable matching in total.

The Gale-Shapley algorithm can thus do a restricted form of \emph{counting}
stable matchings: it can correctly report that an instance has exactly one
stable matching, or that it has \emph{at least} two, in which case it can output
two different stable matchings. The \emph{maximum} number of stable matchings
that an instance can have has also received quite a bit of attention. Irving and
Leather~\cite{doi:10.1137/0215048} discovered a method for constructing
instances with exponentially-many stable matchings; these instances with \(n\)
men and \(n\) women have \(\Omega(2.28^{n})\) stable matchings. This is the
current best lower bound on the maximum number of stable matchings. After a
series of improvements, the current best \emph{upper} bound on this number is
\(\mathcal{O}(c^{n})\) for some constant
\(c\)~\cite{10.1145/3188745.3188848,palmer2020447n}.

Our focus in this work is on finding a large collection of \emph{pairwise
  edge-disjoint} stable matchings:

\defproblem{\DSM}%
{A \SM instance \(G\) and an integer \(k\). }%
{Decide if \(G\) has at least \(k\) pairwise disjoint stable matchings, and output such a collection of stable matchings if it exists.}

Finding such a collection of disjoint stable matchings is clearly useful in
situations which involve \emph{repeated} assignments. For instance, when
assigning people to tasks---drivers to bus routes, medical professionals to
wards, cleaning staff to locations---this helps in avoiding monotony without
losing stability. As another example, consider a business school program which
has a series of projects on which the students are supposed to work in teams of
two. Using a different stable matching from a disjoint collection to pair up
students for each project will help with their collaborative skills while still
avoiding problems of instability.

Even in those cases where only one stable matching suffices---such as when
assigning medical students to hospitals once a year---a disjoint collection can
still be very useful. Given such a collection, an administrator in charge of
deciding the residencies can evaluate each stable matching based on other
relevant considerations---such as gender or racial diversity, or costs of
relocation---to choose an assignment which optimizes these other factors while
still being stable.

Our main result is that \DSM can be solved in \emph{linear} time:

\begin{theorem}
  \label{thm:mainResult}
  There is an algorithm which takes an instance \(G\) of \SM
  , runs in time linear in the size of the input, and outputs a
  pairwise disjoint collection of stable matchings of \(G\) of the largest size.
\end{theorem}
This immediately yields:
\begin{corollary}
  \label{cor:mainResult}
  \DSM can be solved in linear time.
\end{corollary}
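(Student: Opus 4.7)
My plan is to reduce the problem to a combinatorial question about the rotation poset $\Pi$ of $G$, solve that question in linear time, and then translate the solution back into a collection of stable matchings.

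First I would observe that any pairwise edge-disjoint collection of stable matchings can be replaced, without changing its size, by one that forms a chain in the distributive lattice of stable matchings of $G$. If $M$ and $M'$ in the collection are incomparable in the lattice, replace the pair with $M \wedge M'$ and $M \vee M'$; for every man $m$, both $(M \wedge M')(m)$ and $(M \vee M')(m)$ lie in $\{M(m), M'(m)\}$, so each differs from $M''(m)$ for any third member $M''$, preserving pairwise edge-disjointness. Iterating this swap produces a chain of the same size as the original collection.

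Via the classical bijection between stable matchings and downsets of $\Pi$, a chain $M_0 \succ M_1 \succ \cdots \succ M_{k-1}$ in the lattice (with $M_0$ the man-optimal matching) corresponds to a chain of downsets $\emptyset = D_0 \subsetneq D_1 \subsetneq \cdots \subsetneq D_{k-1}$ in $\Pi$, and consecutive matchings being edge-disjoint is equivalent to each block $D_i \setminus D_{i-1}$ containing at least one rotation that involves every man. So the optimization reduces to finding the longest such chain in $\Pi$. The bound $k \leq \min_v k(v)$, where $k(v)$ is the number of stable partners of person $v$, is immediate, since each person must be covered in $k-1$ blocks and is therefore involved in at least $k-1$ rotations.

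For the algorithm I would run Gale-Shapley to find $M_0$ in linear time, then compute $\Pi$ together with the chain of rotations involving each person in linear time via the classical Irving--Leather and Gusfield enumeration. The core combinatorial step is to find the longest feasible chain of downsets in $\Pi$; my plan is a greedy left-to-right sweep of a carefully chosen linear extension of $\Pi$, opening a new block as soon as the current one first covers every man. From the blocks, the matchings $M_1, \ldots, M_{k-1}$ are produced by applying the rotations of successive blocks to $M_0$. Each step runs in $O(n^2)$ time, matching the input size.

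The hard part will be showing that the greedy sweep attains the true optimum. A naively chosen linear extension can concentrate all of some man's rotations near one end, starving later blocks; the choice of extension must exploit the structure of $\Pi$ --- notably, that each rotation of $G$ necessarily involves at least two men and two women, and that the rotations of $\Pi$ involving any single person form a poset chain aligned with that person's stable-partner order. I expect the argument to go through an exchange lemma on linear extensions: any feasible chain of length $k$ can be transformed, block by block, into the greedy's output without shortening it, thereby establishing both the greedy's optimality and the exact characterization of the maximum (very plausibly $\min_v k(v)$, or a close structural relative of it).
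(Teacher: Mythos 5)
Your route is genuinely different from the paper's (which never touches the rotation poset: it repeatedly runs the extended Gale--Shapley algorithm after deleting the edges of the current man-optimal matching, proves the resulting chain is a longest chain by a direct exchange argument, and then invokes the Teo--Sethuraman theorem to convert an arbitrary disjoint family into an equally large chain). Your first reduction --- uncrossing an arbitrary disjoint family into a chain via repeated $\wedge/\vee$ swaps --- is sound in each single step, but you do not argue that the iteration terminates; this is exactly the content of the Teo--Sethuraman result the paper cites, and it needs either that citation or a potential-function argument. The translation to downsets of $\Pi$ and the upper bound $k\le\min_v k(v)$ are fine (read charitably, ``each block contains, for every man, a rotation involving him'').

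The genuine gap is the core combinatorial step. A left-to-right sweep of a linear extension of $\Pi$, closing a block as soon as all men are covered, is \emph{not} optimal for an arbitrary linear extension: if $\Pi$ consists of two disjoint chains $\rho_1\prec\rho_3$ (both involving men $1,2$) and $\rho_2\prec\rho_4$ (both involving men $3,4$), the extension $\rho_1,\rho_3,\rho_2,\rho_4$ closes its first block as $\{\rho_1,\rho_3,\rho_2\}$, starving the second block of any rotation involving men $1,2$ and yielding two matchings where three are achievable via $D_1=\{\rho_1,\rho_2\}$, $D_2=\Pi$. So the entire burden of your proof sits on the ``carefully chosen linear extension'' and the exchange lemma you only ``expect'' to go through; neither is supplied. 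The fix is to abandon the linear-extension sweep in favour of the minimal-closure greedy $D_i := D_{i-1}\cup{\downarrow}\{\text{first rotation of each man not in } D_{i-1}\}$, for which $D_i^{\mathrm{greedy}}\subseteq D_i$ holds for every feasible chain by induction, giving optimality immediately --- and that greedy is precisely what the paper's repeated Gale--Shapley loop computes without ever constructing $\Pi$. As for the closed form $\min_v k(v)$: the paper makes no such claim, and your own hedge is warranted, since a block's downward closure can consume more than one rotation of some other man; you should not rely on that formula. Also note that the statement you are proving is, in the paper, an immediate consequence of Theorem~\ref{thm:mainResult} (compare the size of the computed maximum collection with $k$); everything above concerns your implicit re-proof of that theorem.
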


To the best of our knowledge there is no published work about finding disjoint
\emph{stable} matchings. Finding disjoint \emph{matchings} (without the
stability requirement) has received a lot of attention over the years, and a
number of structural and algorithmic results are
known~\cite{mkrtchyan2008edge,faudree1991neighborhood,10.2307/90020380}; we
mention just one, for perfect matchings in bipartite
graphs. 

Observe that a bipartite graph \(G\) has a perfect matching only if both sides
have the same size, say \(n\). Also, any collection of pairwise disjoint perfect
matchings of such a graph \(G\) can have size at most \(n\). This is because
deleting the edges of one perfect matching from \(G\) decrements the degree of
each vertex by exactly one, and the maximum degree of \(G\) is not more than
\(n\). A graph is said to be \(k\)-regular if each of its vertices has degree
exactly \(k\). K\"{o}nig proved that a bipartite graph \(G\) contains \(k\)
pairwise edge-disjoint perfect matchings if and only if \(G\) has a
\(k\)-regular subgraph~\cite{konig1916graphen}. Tutte's polynomial-time
algorithm for finding the so-called \(f\)-factors~\cite{tutte1954short} can be
used to find a \(k\)-regular subgraph of \(G\). Putting these together we get
a polynomial-time algorithm for finding a largest collection of edge-disjoint
perfect matchings in bipartite graphs.

In stark contrast, checking if a non-bipartite graph has \emph{two} disjoint
perfect matchings is already \textsf{NP}-hard even in \(3\)-regular
graphs~\cite{holyer1981np,DBLP:journals/corr/abs-2009-04567}.

\paragraph*{Relation to lattice structure.~} It is known that the set of stable
matchings in a given instance forms a distributive lattice \cite{Knuth76}. We
show that there is always a solution to \DSM that is a chain in this lattice. We
give an algorithm to enumerate all the chains of disjoint stable matchings. The
algorithm takes time polynomial in the size of the input for outputting each
such chain. We also show that the expected number of such chains in a random
instance is at most quasi-polynomial with high probability.


\section{Preliminaries}\label{sec:prelim}

We recall the Gale-Shapley algorithm and the lattice structure of stable
matchings here for the sake of completeness. The classical Gale-Shapley
algorithm~\cite[Figure~1.3]{Gusfield1989} solves the \SM problem by a deferred
acceptance mechanism. Each man proposes the women on his list in decreasing
order of preference until some woman accepts his proposal. A woman $w$ accepts a
proposal from a man $m$ if either $w$ is unmatched or she prefers $m$ over her
current partner. The extended version of the Gale-Shapley algorithm
(\autoref{algorithm:extended-gale-shapley})~\cite[Figure~1.7]{Gusfield1989}
\textit{reduces} the preference lists by eliminating certain pairs that do not
belong to any stable matching. By \textit{deleting} a (\textit{man-woman}) pair
$(m,w)$, we mean deleting $m$ from $w$'s preference list and $w$ from that of
$m$.\\

\begin{algorithm}[]
  \caption{Extended Gale-Shapley}
  \label{algorithm:extended-gale-shapley}
  \begin{algorithmic}[1]
    \Procedure {GS-Extended}{$G$} \Comment {$G$ is an SM instance}
    \State {\texttt{assign each person to be free}}
    \While {\texttt{some man $m$ is free}}
    \State $w \gets $ first woman on $m$'s list
    \If {\texttt{some man $p$ is engaged to $w$}}
    \State \texttt{assign $p$ to be free}
    \EndIf
    \State \texttt{assign $m$ and $w$ to be engaged to each other}
    \For {\texttt{each successor $m'$ of $m$ on $w$'s list}}
    \State \texttt{delete $w$ on $m'$'s list}
    \State \texttt{delete $m'$ on $w$'s list} \Comment {deleting the pair $(m',w)$}
    \EndFor
    \EndWhile
    
    \Return \texttt{Stable matching consisting of $n$ engaged pairs}
    \EndProcedure
  \end{algorithmic}
\end{algorithm}

The algorithm terminates when every man is engaged or has exhausted his
preference list. When the algorithm ends, the resulting modified preference list
is a \textit{reduced list}. Furthermore, it can be easily verified that, on
termination, each man is either unmatched or is engaged to the first woman in
his \textit{reduced} preference list, and each woman is either unmatched or is
engaged to the last man in hers. These engaged pairs constitute a man-optimal
stable
matching. 
It is known that every stable matching leaves the \emph{same} set of people
unmatched~\cite{Gusfield1989}.

For a given stable marriage instance we will refer to the final preference lists
generated by 
\EGS, with men as proposers, as
\textit{man-oriented Gale-Shapley lists}, or \textit{MGS-lists}. The final
preference lists generated by this algorithm when women do the proposing are
called \textit{WGS-lists}. Finally, if we take for each person the intersection
of their MGS-list and WGS-list, we get the \textit{GS-list}. It is known that
the GS-lists can be obtained by first applying man-oriented 
\EGS to get MGS-lists and then, starting with the MGS-lists, applying
woman-oriented 
\EGS~\cite{Gusfield1989}.

Let $G_{GS-list}$ be the graph obtained from the GS-lists as follows: Each man
$m_i$ is represented by a vertex $m_i$ and each woman $w_i$ is represented by a
vertex $w_i$, and an edge $(m_i,w_i)$ is present if and only if $m_i$ is in
$w_i$'s preference list in the GS-lists. We say that a matching $M$ is
\emph{contained} in the GS-lists if $M$ is a matching in $G_{GS-list}$.

The next theorem summarizes some useful properties of GS-lists.
\begin{theorem}\emph{\cite[Theorem~1.2.5]{Gusfield1989}}\label{thm:gs-lists}
  For a given instance of the stable marriage problem,
  \begin{enumerate}
  \item all stable matchings are contained in the GS-lists;
  \item no matching (stable or otherwise) contained in the GS-lists can be blocked by a pair that is not in the GS-lists;
  \item In the man-optimal (respectively woman-optimal) stable matching, each man is partnered by the first (respectively last) woman on his GS-list, and each woman by the last (respectively first) man on hers. 
  \end{enumerate}
\end{theorem}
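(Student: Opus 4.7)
The plan is to prove the three parts in sequence, using as the main lever the invariant articulated in the paragraph preceding \autoref{algorithm:gale-shapley}: whenever the man-oriented extended Gale-Shapley deletes a pair $(m',w)$, the woman $w$ has (or has had) a proposal from some $m$ whom she prefers to $m'$, and the worst stable partner $p_{M_0}(w)$ is at least as preferred by $w$ as $m$. Consequently every pair deleted in the man-oriented run is strictly dispreferred by $w$ relative to \emph{every} stable partner of $w$, so in particular no deleted pair can itself be stable. The symmetric statement holds for the woman-oriented run.

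For (i), the above observation immediately implies that every stable pair survives in the MGS-lists and, by symmetry, in the WGS-lists; hence it survives in their intersection, the GS-lists. Thus every stable matching is contained in the GS-lists.

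For (ii), I would take a matching $M$ contained in the GS-lists and a pair $(m,w)$ not in the GS-lists, and argue that $(m,w)$ does not block $M$. Since $(m,w)$ is absent, either $(m,w)\notin$ MGS-lists or $(m,w)\notin$ WGS-lists. In the first case, the invariant shows that some $m''$ became engaged to $w$ during the man-oriented run with $w$ preferring $m''$ to $m$, and from that point onward the only men remaining on $w$'s MGS-list are men she weakly prefers to $m''$. Since $M$ lies in the MGS-lists, the partner of $w$ in $M$ is such a man and is therefore strictly preferred by $w$ to $m$, so $w$ has no incentive to deviate. The second case is symmetric on $m$'s side and yields that $m$'s partner in $M$ is strictly preferred by $m$ to $w$. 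Either way $(m,w)$ cannot block $M$.

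For (iii), we already know from the discussion just after \autoref{algorithm:extended-gale-shapley} that on termination of the man-oriented extended Gale-Shapley each man is engaged to the first woman on his MGS-list, each woman to the last man on hers, and that these engagements form the man-optimal stable matching $M_0$. By (i), every pair of $M_0$ lies in the GS-lists, so $M_0(m)$ appears on $m$'s GS-list and $M_0(w)$ appears on $w$'s. Since the GS-list is obtained from the MGS-list purely by deletions and therefore preserves the relative order of its entries, the first entry of the MGS-list that survives is first in the GS-list; as $M_0(m)$ is that entry, it is first on $m$'s GS-list, and dually $M_0(w)$ is last on $w$'s GS-list. The woman-optimal clauses follow by running the same argument from the WGS-lists. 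The hard part, I expect, is making (ii) fully self-contained; the cleanest route is to isolate a pre-lemma stating the invariant ``after the man-oriented extended Gale-Shapley terminates, $w$ weakly prefers every surviving man on her MGS-list to her engagement, and strictly prefers her engagement to every man deleted from her list'' (together with its dual for men), so that (ii) falls out in two lines without having to reason about the algorithm's dynamics inside the proof of the theorem itself.
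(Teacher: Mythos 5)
Your argument is correct and is essentially the standard textbook proof; note that the paper itself offers no proof of this theorem---it is cited verbatim from Gusfield and Irving---and the only justification the paper supplies is the informal paragraph in the preliminaries about deletions performed by the man-oriented extended algorithm, which is precisely the invariant you take as your lever. The one point to be aware of is that this lever (no pair deleted by the extended algorithm is a stable pair, because $w$'s worst stable partner is at least as good as any man who ever proposes to her) is itself the mathematical heart of the theorem; your closing suggestion to isolate it as a pre-lemma is the right instinct, and a self-contained proof of that pre-lemma needs the usual induction on the sequence of deletions (consider the first deleted pair $(m',w)$ belonging to some stable matching $M$: it is deleted because some $m''$ whom $w$ prefers to $m'$ proposes to $w$, and since all earlier deletions removed only non-stable pairs, $m''$ prefers $w$ to $p_M(m'')$, so $(m'',w)$ blocks $M$, a contradiction). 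Granting that, parts (i)--(iii) go through exactly as you write them; in particular the order-preservation observation in (iii)---that each person's GS-list is a sublist of his or her MGS-list containing the man-optimal partner, which therefore stays first (resp.\ last)---is the right way to transfer the termination property of the extended algorithm from the MGS-lists to the GS-lists.
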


\paragraph*{Lattice structure of stable matchings.} We need the following
results about the lattice structure of stable matchings~\cite{Gusfield1989}. For
a given stable marriage instance, a \emph{dominance relation} on stable
matchings is defined as follows:
\begin{definition}[Dominance]\label{def:dominance}
  A stable matching $M$ is said to \emph{dominate} a stable matching $M'$,
  written $M \preceq M'$, if every man has at least as good a partner in $M$ as
  he has in $M'$; i.e., every man either prefers $M$ to $M'$ or is indifferent
  between them.
\end{definition}

\begin{lemma}\emph{\cite[Lemma~1.3.1]{Gusfield1989}}
  For a given stable marriage instance, let $M$ and $M'$ be two (distinct) stable matchings. If each man is given the better of his partners in $M$ and $M'$ (denoted as $M \wedge M'$), then the result is a stable matching that dominates both $M$ and $M'$. 
\end{lemma}

\begin{lemma}\emph{\cite[Lemma~1.3.2]{Gusfield1989}}
  For a given stable marriage instance, let $M$ and $M'$ be two (distinct) stable matchings. If each man is given the poorer of his partners in $M$ and $M'$ (denoted as $M \vee M'$), then the result is a stable matching that is dominated by both $M$ and $M'$. 
\end{lemma}

With the help of the above lemmas, it is easy to see that the set of all stable matchings forms a distributive lattice and the man-optimal matching and the woman-optimal matching represent the minimum and maximum elements of the lattice  \cite[Theorem~1.3.2]{Gusfield1989}. Moreover, $M \wedge M'$ represents the \emph{greatest lower bound} and $M \vee M'$ represents \emph{least upper bound} of $M$ and $M'$ in the lattice of all the stable matchings.

\section{Finding Disjoint Stable Matchings}\label{sec:algorithm}

In this section we describe and analyze our algorithm for finding a largest
collection of disjoint stable matchings in a given instance of \SM.

Given a stable marriage instance, two matchings \(M_1\) and \(M_2\) are said to
be {\em disjoint stable matchings} if both \(M_1\) and \(M_2\) are stable and
they do not share a common edge. Throughout this section, we denote the
man-optimal and woman-optimal stable matchings by $M_o$ and $M_z$ respectively.
First, we would like to know if there exists a stable marriage instance which has at least two disjoint stable matchings. The following example of a stable marriage instance shows the existence of disjoint stable matchings.

\begin{table}[H]
  \centering
\begin{tabular}{|p{5mm}|p{5mm}p{5mm}p{5mm}|p{5mm}|p{5mm}|p{5mm}p{5mm}p{5mm}|}
  \hline
  1    & 1 & 2 & 3  &  & 1 &   2 & 3 & 1 \\
  2    & 2 & 3 & 1  &  & 2 &   3 & 1 & 2 \\
  3    & 3 & 1 & 2  &  & 3 &   1 & 2 & 3 \\
  \hline
 \multicolumn{4}{c}{Men's Preferences}   &   \multicolumn{5}{c}{Women's Preferences} \\
  \hline 
\end{tabular}
\captionof{figure}{A stable marriage instance of size 3.}
\end{table}
\label{fig:marriage_instance_3.1}

It can be easily verified that the above marriage instance has three (and only three) disjoint matchings as given below.

\begin{center}
  \label{fig:disjoint_stable}
  \begin{tikzpicture}
    \draw[fill=black] (0,0) circle (2pt);
    \draw[fill=black] (0,1) circle (2pt);
    \draw[fill=black] (0,2) circle (2pt);
    \draw[fill=black] (2,0) circle (2pt);
    \draw[fill=black] (2,1) circle (2pt);
    \draw[fill=black] (2,2) circle (2pt);
    \node at (-0.5,0) {$m_3$};
    \node at (-0.5,1) {$m_2$};
    \node at (-0.5,2) {$m_1$};
    \node at (2.5,0) {$w_3$};
    \node at (2.5,1) {$w_2$};
    \node at (2.5,2) {$w_1$};
    
    \draw[thick] (0,0) -- (2,0);
    \draw[thick] (0,1) -- (2,1);
    \draw[thick] (0,2) -- (2,2);
  \end{tikzpicture}
  \hspace{2pt}
  \begin{tikzpicture}
    \draw[fill=black] (0,0) circle (2pt);
    \draw[fill=black] (0,1) circle (2pt);
    \draw[fill=black] (0,2) circle (2pt);
    \draw[fill=black] (2,0) circle (2pt);
    \draw[fill=black] (2,1) circle (2pt);
    \draw[fill=black] (2,2) circle (2pt);
    \node at (-0.5,0) {$m_3$};
    \node at (-0.5,1) {$m_2$};
    \node at (-0.5,2) {$m_1$};
    \node at (2.5,0) {$w_3$};
    \node at (2.5,1) {$w_2$};
    \node at (2.5,2) {$w_1$};
    
    \draw[thick] (0,0) -- (2,2);
    \draw[thick] (0,1) -- (2,0);
    \draw[thick] (0,2) -- (2,1);
  \end{tikzpicture}
  \hspace{2pt}
  \begin{tikzpicture}
    \draw[fill=black] (0,0) circle (2pt);
    \draw[fill=black] (0,1) circle (2pt);
    \draw[fill=black] (0,2) circle (2pt);
    \draw[fill=black] (2,0) circle (2pt);
    \draw[fill=black] (2,1) circle (2pt);
    \draw[fill=black] (2,2) circle (2pt);
    \node at (-0.5,0) {$m_3$};
    \node at (-0.5,1) {$m_2$};
    \node at (-0.5,2) {$m_1$};
    \node at (2.5,0) {$w_3$};
    \node at (2.5,1) {$w_2$};
    \node at (2.5,2) {$w_1$};
    
    \draw[thick] (0,0) -- (2,1);
    \draw[thick] (0,1) -- (2,2);
    \draw[thick] (0,2) -- (2,0);
  \end{tikzpicture}
  
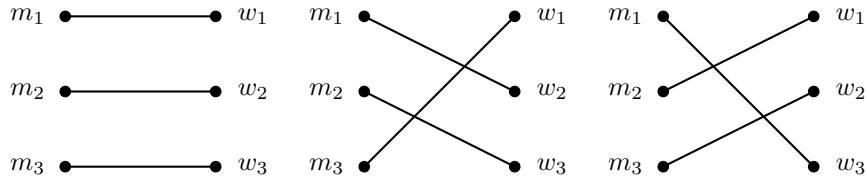
\captionof{figure}{Disjoint stable matchings $M_0, M_1$ and $M_z$}
\end{center}

The following lemma gives a necessary condition for the existence of two or more
disjoint stable matchings for a given marriage instance. 
\begin{lemma}\emph{\cite[Section~1.2.2]{Gusfield1989}}
  Let $(m,w)$ be a pair in $M_o\cap M_z$. Then $(m,w)$ is contained in every stable matching. 
\end{lemma}

The algorithm first finds the man-optimal and woman-optimal stable matchings
($M_o$ and $M_z$ respectively) by executing
\EGS. 
If these matchings share an edge, the algorithm stops. Otherwise it modifies the
instance by deleting all the edges that appear in $M_o$. It then computes a
man-optimal matching \(M'\) of the new instance using \EGS. If \(M'\) is
disjoint from the woman-optimal matching \(M_{z}\) then it deletes the edges of
\(M'\) from the instance. The algorithm repeats this procedure as long as \EGS
keeps returning a stable matching which is disjoint from \(M_{z}\). It stores
all the \(M_{z}\)-disjoint matchings obtained during this process in a set $S$.
We note that this is a stronger version of the \emph{BreakMarriage} algorithm of
McVitie and Wilson~\cite{McVitie71}.

\begin{algorithm}[hbt]
  \caption{Disjoint Stable Matchings}\label{algo:one}
  \label{algorithm:disjoint-SM}
  \hspace*{\algorithmicindent} \textbf{Input \ :} A stable matching instance \(G\) \\
  \hspace*{\algorithmicindent} \textbf{Output:} A maximum size set \(S\) of disjoint stable matchings.
  \begin{algorithmic}[1]
    \Procedure {Disjoint stable matchings}{$G$}
    \State \texttt{$S \gets \varnothing$}
    \State \texttt{$M_z \gets$}\Call{StableMatching}{$G$, woman-optimal} \Comment{\scriptsize Woman-proposing GS Algorithm}\normalsize \label{line:stm}

    \State \texttt{$X \gets $}\Call{GS-Extended}{$G$} \Comment {\scriptsize This modifies  preference lists}\normalsize\label{line:egs1}
    \While {\texttt{$X \cap M_z = \varnothing$}}\label{line:while}
    \State \texttt{$S \gets S \cup \{X\}$}
    \For {\texttt{every man} $m$}\label{line:for-loop}
    \State \texttt{Delete the first woman $w$ on $m$'s list} \Comment {\scriptsize \(m\)'s partner in \(X\)}\normalsize \label{line:delete-w}
    \State \texttt{Delete the last man on $w$'s list} \Comment {\scriptsize \(w\)'s partner in \(X\)}\normalsize \label{line:delete-m}
    \EndFor
    \State \texttt{$X \gets
      $}\Call{\hyperref[algorithm:extended-gale-shapley]{GS-Extended}}{$G$} \Comment {\scriptsize Get a new disjoint matching as $X$}\normalsize \label{line:egs2}  
    \EndWhile
    \State \texttt{$S \gets S \cup \{X\}$}
    
    \Return \texttt{$S$}
    \EndProcedure    
  \end{algorithmic}

\end{algorithm}

We first show that the matchings in the set $S$ constructed by
\autoref{algo:one} are stable. They are clearly disjoint by construction,
since each step starts off by deleting \emph{every} matched pair in the
matching computed in the previous step. The proof of the following lemma 
appears in Appendix.
\begin{lemma}\label{lem:stable}
  All the matchings in the set $S$ are stable matchings.
\end{lemma}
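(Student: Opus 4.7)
The plan is to prove by induction on the number of iterations of the \texttt{while} loop that every matching $X$ added to $S$ is stable in the original instance $G$; the final matching $M_z$ added after the loop is stable by construction, since it is produced by the woman-proposing Gale--Shapley call on line~\ref{line:stm}. The base case is the first $X$, returned by the Extended Gale--Shapley call on line~\ref{line:egs1}: this is the man-optimal stable matching $M_o$, and is stable in $G$ by the classical Gale--Shapley theorem. For the inductive step, I would assume that $X_0, X_1, \ldots, X_{j-1}$ are all stable in $G$ and show that the matching $X_j$ produced by the subsequent call on line~\ref{line:egs2} is also stable in $G$.

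The technical core is a monotone ``preferred-over-deleted'' invariant maintained throughout the algorithm: after every deletion, if a man $m'$ is no longer on some woman $w$'s current list, then $w$ strictly prefers every man still on her list to $m'$, and symmetrically for men's lists. Deletions occur in only two ways. Inside Extended Gale--Shapley, when $w$ accepts a proposer $p$ all successors of $p$ on her list are dropped; since a woman's engagement only improves during a single run, her final partner at the end of the run is at least as preferred as $p$, so every man still on her list is at least as preferred as $p$ and therefore strictly more preferred than any successor that was deleted. In the outer loop, lines~\ref{line:delete-w}--\ref{line:delete-m} remove the \emph{last} man on $w$'s current list, which by \autoref{thm:gs-lists}(iii) is precisely $X_{j-1}(w)$; this only trims the worst remaining option, so the invariant is preserved, and by the first observation every survivor of the next Extended Gale--Shapley run is then preferred by $w$ over $X_{j-1}(w)$ as well.

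With the invariant in hand the stability of $X_j$ in $G$ follows in one line. Suppose, for contradiction, that a pair $(m,w)$ blocks $X_j$ in $G$. Since $X_j$ is the man-optimal matching produced by Extended Gale--Shapley on the current modified preference lists, it is stable with respect to them, so the pair $(m,w)$ is absent from those lists; hence $m$ must have been deleted from $w$'s list at some earlier stage. The invariant then forces $w$ to strictly prefer every man still on her list, in particular her current partner $X_j(w)$, to $m$, contradicting the blocking-pair assumption that $w$ prefers $m$ to $X_j(w)$. The principal obstacle lies in setting up the invariant correctly: the successor-style deletions inside Extended Gale--Shapley interleave with the tail-deletions of the outer loop across many iterations, and one has to verify carefully that each deletion, on both the men's and women's sides, preserves the ``preferred-over-deleted'' property before the one-line blocking-pair contradiction can be invoked.
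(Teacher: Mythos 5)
Your proposal is correct and follows essentially the same route as the paper: both arguments reduce to observing that a blocking pair $(m,w)$ must have been deleted at some point, that deletions arise in exactly two ways (successor-deletions inside Extended Gale--Shapley and the tail-deletions in the outer for loop), and that in either case $w$ ends up with a partner she strictly prefers to the deleted man, contradicting the blocking condition. Your explicit ``preferred-over-deleted'' invariant and the inductive wrapper just make precise what the paper's proof asserts more tersely.
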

\begin{proof}[of Lemma~\ref{lem:stable}]
  For the sake of contradiction, let $(m,w)$ be a blocking pair for a matching
  $M_i \in S$. Then, $m$ prefers $w$ to $p_{M_i}(m)$, where $p_{M_i}(m)$ is the
  partner of $m$ in $M_i$. That is, $w$ appears before $p_{M_i}(m)$ in $m$'s
  preference list. As $m$ is matched to $p_{M_i}(m)$ in the matching $M_i$, $w$
  would have been deleted from $m$'s preference list \textit{before} the call to
  GS-Extended that returned the matching $M_i$. This deletion can happen in two
  ways. Either in one of the calls to the Extended GS algorithm, or in one of
  the iterations of the for loop in line~\ref{line:for-loop} of the algorithm.
  We know that in both the cases, after the deletion of $w$ from $m$'s
  preference list, $w$ gets a strictly better partner than $m$ in the subsequent
  matching. Therefore, $w$ does not prefer $m$ to $p_{M_i}(w)$. This contradicts
  our assumption.\qed
\end{proof}

Building on the notion of dominance from \autoref{def:dominance}, we say that
$M$ {\em strictly dominates} $M'$, denoted by $M\prec M'$, if $M\preceq M'$ and
$M\cap M'=\emptyset$.
The strict dominance relation imposes a partial order on the set of stable
matchings in $G$. We call a set of stable matchings a \emph{chain} if it forms a
chain under the (non-strict) dominance relation of \autoref{def:dominance}. Let
$M_i$ be the matching included in $S$ at the end of iteration $i$ of the
algorithm, and let $|S|=k$.
\begin{lemma}\label{lemma:chain}
The stable matchings in the set $S$ form a chain $M_o=M_1,\ldots,M_k$.
\end{lemma}
\begin{proof}
Each iteration of the algorithm modifies the given instance by deleting 
the edges of the matching constructed. Let the instance considered at the beginning of iteration $i$ be $G_i$.
Thus $G_1=G$. Since $M_i$ is constructed by executing the extended Gale-Shapley algorithm on the instance $G_i$,
it follows that $M_i$ is the man-optimal matching in $G_i$.
Further, all the men get strictly
better partners in $M_i$ compared to $M_j$, $j>i$ and all the women get strictly worse partners in $M_i$
compared to $M_j$, for $j>i$.\qed
\end{proof}

We now show that among all the chains of disjoint stable matchings, the one output by \autoref{algorithm:disjoint-SM} is a 
longest chain. 
\begin{lemma}
  \label{lemma:longest-chain}
  \autoref{algorithm:disjoint-SM} outputs a longest chain of disjoint stable matchings.
\end{lemma}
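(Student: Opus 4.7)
The plan is to compare the algorithm's chain $M_1 \prec M_2 \prec \cdots \prec M_k$, already known to satisfy $M_1 = M_o$ and $M_k = M_z$, against an arbitrary chain $N_1 \prec N_2 \prec \cdots \prec N_l$ of stable matchings of $G$, and to argue $l \leq k$. Since $M_o$ and $M_z$ are respectively the unique top and bottom of the dominance order on the stable matchings of $G$, every chain starts at $M_o$ and ends at $M_z$, so $N_1 = M_1 = M_o$ and $N_l = M_k = M_z$ automatically.

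The heart of the argument would be an induction on $i$ establishing the invariant that, for each $1 \leq i \leq \min(k, l)$, the algorithm's matching $M_i$ \emph{weakly} dominates $N_i$---that is, every man gets at least as good a partner in $M_i$ as in $N_i$. The base case $i = 1$ is immediate. For the inductive step, suppose the invariant holds for all indices up to $i$. The chain property of the $N$-sequence gives that every man strictly prefers his $N_j$-partner to his $N_{i+1}$-partner for each $j \leq i$, and combining this with the induction hypothesis that $M_j$ weakly dominates $N_j$ yields that every man strictly prefers his $M_j$-partner to his $N_{i+1}$-partner. Consequently $M_j$ and $N_{i+1}$ share no edge for any $j \leq i$, so $N_{i+1}$ is a matching inside $G_{i+1} = G \setminus (M_1 \cup \cdots \cup M_i)$. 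Because any blocking pair of $N_{i+1}$ in $G_{i+1}$ would, under the original preferences, also block $N_{i+1}$ in $G$, the stability of $N_{i+1}$ in $G$ transfers to $G_{i+1}$. The algorithm's matching $M_{i+1}$ is the man-optimal stable matching of $G_{i+1}$, hence it weakly dominates every stable matching of $G_{i+1}$, including $N_{i+1}$, which closes the induction.

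To finish, suppose for contradiction that $l > k$. Applying the invariant at $i = k$ yields that $M_k = M_z$ weakly dominates $N_k$; but $M_z$ is the man-pessimal stable matching of $G$, so this forces $N_k = M_z = N_l$, contradicting strictness of the chain $N_k \prec N_{k+1} \prec \cdots \prec N_l$. Hence $l \leq k$, which is precisely the desired statement.

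The main obstacle I anticipate is the inductive step, specifically upgrading the disjointness of the $N_j$'s among themselves (which is what the chain relation supplies directly) to disjointness of $N_{i+1}$ from every $M_j$ with $j \leq i$. This upgrade is exactly what lets $N_{i+1}$ live inside $G_{i+1}$ and be compared against $M_{i+1}$; threading the induction hypothesis through the chain inequality for the $N$-sequence to reach it, rather than proceeding by a single direct comparison, is the nontrivial point of the argument.
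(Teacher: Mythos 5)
Your proof is correct and follows essentially the same route as the paper's: both arguments work down the competing chain, using the fact that the algorithm's $M_{i+1}$ is man-optimal in the residual instance $G_{i+1}$ (into which the competitor's $(i+1)$-st matching must fall, being disjoint from $M_1,\dots,M_i$) to conclude that $M_i$ weakly dominates $N_i$ for every $i$, and then invoke man-pessimality of $M_k=M_z$ to bound the length; the paper merely packages this induction as a step-by-step replacement of $M_i'$ by $M_i$. One inessential slip: your opening claim that every chain automatically begins at $M_o$ and ends at $M_z$ is false under the paper's strict dominance order (a chain need not contain either extreme matching), but nothing in your argument actually depends on it --- the base case follows from man-optimality of $M_1=M_o$ alone, and the final contradiction already follows from the impossibility of $N_k=M_z\prec N_{k+1}$.
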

\begin{proof}
  Let $C: M_o=M_1\prec M_2\prec \cdots \prec M_k$ be the chain of disjoint matchings obtained by 
running \autoref{algorithm:disjoint-SM}. For the sake of contradiction, let $C'': M'_1\prec M'_1\prec \cdots \prec M'_\ell$ be a longest chain of disjoint matchings such that $\ell>k$.

  We know that the matching $M_1=M_o$ dominates \emph{every} stable matching~\cite[Theorem~1.2.2]{Gusfield1989}. 
Matching $M'_1$ cannot be disjoint with $M_1$, as otherwise, $M_1\prec M'_1\prec M'_2\prec \cdots \prec M'_{\ell}$ 
would be a longer chain of disjoint stable matchings. Therefore, $M'_1$ shares some edges with $M_1$. As $M_1 \preceq M'_1 \prec M'_2$, we have 
$M_1 \prec M'_2$. Therefore we can replace $M'_1$ in $M'_1\prec M'_2\prec \cdots \prec M'_\ell$ with $M_1$ to get another chain of disjoint 
stable matchings $M_1\prec M'_2\prec \cdots \prec M'_\ell$ of length $\ell$.

  We know that $M_2$ dominates all the stable matchings which are disjoint with $M_1$. Matching $M'_2$ cannot be  disjoint with $M_2$, as otherwise, we can get 
a longer chain $M_1 \prec M_2 \prec M'_2 \prec M'_3 \prec \cdots \prec M'_\ell$. Therefore, $M'_2$ shares edges with $M_2$. As $M_2 \preceq M'_2 \prec M'_3$, 
we have $M_2 \prec M'_3$. Therefore we can replace $M'_2$ with $M_2$ to get another chain of disjoint stable matchings 
$M_1\prec M_2\prec M'_3 \prec \cdots \prec M'_\ell$ of length $\ell$. 

In this way, we successively replace each $M'_i$ of the chain $C''$ with $M_i$ from the chain $C$ to get the
$\ell$-length chain \(M_1 \prec M_2 \prec \dotsc M_{k} \prec M'_{k+1} \prec \cdots \prec
M'_\ell\) of disjoint stable matchings. But this implies that there exists a
stable matching \(M'_{k+1}\) which satisfies the strict relation \(M_{k} \prec
M'_{k+1}\), which is a contradiction since \(M_{k}\) has non zero interection with the woman-optimal matching \(M_{z}\) .\qed 
\end{proof}

We have shown that among all the chains of disjoint stable matchings, the one
output by \autoref{algorithm:disjoint-SM} is of maximum length. We still need to
prove that there is no larger set of disjoint stable matchings which is possibly
not a chain. We use the following result due to Teo and Sethuraman to show that
any such set of disjoint stable matchings has a corresponding chain of disjoint
stable matchings. Moreover, the length of this chain is same as the size of the
set.
\begin{theorem}\emph{\cite{teo1998geometry}}
  \label{theorem:teo-sethuraman}
  Let $S=\{M_1,M_2,\cdots,M_k\}$ be a set of stable matchings for a particular
  stable matchings instance. For each man $m$, let $S_m$ be the \emph{sorted}
  multiset \(\{p_{M_1}(m),p_{M_2}(m),\cdots,p_{M_k}(m)\}\), sorted according to
  the preference order of \(m\). For every $i\in\{1,2,\cdots,k\}$ let 
  \( M_i' = \{(m,w) \mid m \in\Men \text{ and }$w$ \text{ is the }\\i^{\text{th}}\text{ woman in }S_m\}\).
  Then for each $i \in \{1,2,\cdots,k\}$, $M_i'$ is a stable matching.
\end{theorem}
The following is an immediate corollary of \autoref{theorem:teo-sethuraman}:
\begin{corollary}
  \label{corollary:disjoint-chain}
  Let $M_1,\ldots,M_k$ and $M'_1,\ldots, M'_k$ be as defined in \autoref{theorem:teo-sethuraman}. If $M_1,\ldots, M_k$ are
  pairwise disjoint, then $M'_1,\ldots,M'_k$ form a $k$-length chain of disjoint stable matchings.

\end{corollary}

The following theorem now completes the correctness of \autoref{algorithm:disjoint-SM}.

\begin{theorem}
  \label{theorem:disjoint-SM}
  For a given stable marriage instance, \autoref{algorithm:disjoint-SM} gives a maximum size set of disjoint stable matchings.
\end{theorem}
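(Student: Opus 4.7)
The plan is to reduce this claim to the two results already proved: Lemma~\ref{lemma:longest-chain}, which says the algorithm produces a longest \emph{chain} of pairwise disjoint stable matchings under the dominance order $\prec$, and Corollary~\ref{corollary:disjoint-chain}, which turns any \emph{set} of $k$ pairwise disjoint stable matchings into a chain of the same length $k$. Together these upgrade ``longest chain'' to ``largest set'': if some set of disjoint stable matchings were strictly larger than the algorithm's output, the corollary would construct a chain of disjoint stable matchings strictly longer than the one the algorithm produces, contradicting Lemma~\ref{lemma:longest-chain}.

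Concretely, I would argue by contradiction. Let $S$ be the output of Algorithm~\ref{algorithm:disjoint-SM} and set $k = |S|$. Suppose some collection $T = \{N_1, \ldots, N_\ell\}$ of pairwise disjoint stable matchings of $G$ has $\ell > k$. Because the matchings in $T$ are pairwise edge-disjoint, for each man $m$ the partners $p_{N_1}(m), \ldots, p_{N_\ell}(m)$ are $\ell$ distinct women. Hence the ``sorted multiset'' in Theorem~\ref{theorem:teo-sethuraman} is really a sorted set of size $\ell$, and Corollary~\ref{corollary:disjoint-chain} applies directly to $T$ to produce stable matchings $N_1' \prec N_2' \prec \cdots \prec N_\ell'$ that are pairwise disjoint, that is, a chain of disjoint stable matchings of length $\ell$.

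But Lemma~\ref{lemma:longest-chain} asserts that $S$, viewed as the chain $M_1 \prec \cdots \prec M_k$, is already a longest chain of pairwise disjoint stable matchings in $G$. Since $\ell > k$, the chain derived from $T$ is strictly longer, which is the desired contradiction. Hence no set of pairwise disjoint stable matchings exceeds $|S|$ in size, and Algorithm~\ref{algorithm:disjoint-SM} indeed returns a maximum-size such set.

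The only step requiring a little care is verifying that Corollary~\ref{corollary:disjoint-chain} delivers a chain of length \emph{exactly} $\ell$ rather than something shorter (as could happen for a multiset that collapses on sorting); this is precisely where the disjointness of $T$ is used. I do not expect any genuine obstacle here: the heavy lifting was done by the Teo--Sethuraman theorem and Lemma~\ref{lemma:longest-chain}, and the remaining argument is essentially bookkeeping to combine them.
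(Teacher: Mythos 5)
Your proposal is correct and follows essentially the same route as the paper: assume a larger disjoint set exists, apply Corollary~\ref{corollary:disjoint-chain} to convert it into an equally long chain, and contradict Lemma~\ref{lemma:longest-chain}. Your version is in fact slightly more careful than the paper's (which contains some index typos), since you explicitly note that pairwise disjointness makes each man's partners distinct, guaranteeing the derived chain has length exactly $\ell$.
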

\begin{proof}
  Let $S= \{M_1=M_o,M_2,\cdots,M_k\}$ be the set of disjoint stable matchings output 
by \autoref{algorithm:disjoint-SM}. For the sake of contradiction, Let $S' = \{M'_1,M'_2,\cdots, M'_\ell\}$ 
be a maximum size set of disjoint stable matchings such that $\ell>k$. Then, from \autoref{corollary:disjoint-chain} 
of \autoref{theorem:teo-sethuraman}, we know that there exists an \(\ell\)-length chain of disjoint 
stable matchings. This contradicts \autoref{lemma:longest-chain}, that the $k<\ell$ matchings from $S$ form a \emph{longest} chain of disjoint stable matchings.\qed
\end{proof}
\paragraph*{Time complexity:} Each edge of $G$ is visited exactly once during the course of the algorithm. Hence the time complexity is $O(m+n)$ where
$2n$ is the number of vertices in $G$ and $m$ is the number of edges in $G$.
This completes the proof of \autoref{thm:mainResult}.


\section{Enumerating all max-length Chains}
Algorithm~\ref{algo:one} gives one maximum-length chain of disjoint stable matchings. It is an interesting question whether such a chain
is unique. The example in Figure~\ref{fig:disjoint_stable-list} shows that there can be multiple maximum-length chains of disjoint stable 
matchings. 

\begin{figure}
\begin{table}[H]
  \centering
\begin{tabular}{|p{5mm}|p{5mm}p{5mm}p{5mm}p{5mm}p{5mm}p{5mm}|p{5mm}p{5mm}p{5mm}p{5mm}|}
  \hline
  $m_1$ & $w_4$ & $w_1$ & $w_3$ & $w_2$ & & $w_1$ & $m_2$ & $m_1$ & $m_3$ & $m_4$\\
  $m_2$ & $w_4$ & $w_2$ & $w_3$ & $w_1$ & & $w_2$ & $m_1$ & $m_3$ & $m_2$ & $m_4$\\
  $m_3$ & $w_1$ & $w_3$ & $w_2$ & $w_4$ & & $w_3$ & $m_4$ & $m_2$ & $m_3$ & $m_1$\\
  $m_4$ & $w_1$ & $w_4$ & $w_2$ & $w_3$ & & $w_4$ & $m_3$ & $m_4$ & $m_2$ & $m_1$\\
  \hline
\end{tabular}
\end{table}

%
\begin{center}
  \begin{tikzpicture}[scale=0.8]
    \draw[fill=black] (0,0) circle (2pt);
    \draw[fill=black] (0,1) circle (2pt);
    \draw[fill=black] (0,2) circle (2pt);
    \draw[fill=black] (0,3) circle (2pt);
    \draw[fill=black] (2,0) circle (2pt);
    \draw[fill=black] (2,1) circle (2pt);
    \draw[fill=black] (2,2) circle (2pt);
    \draw[fill=black] (2,3) circle (2pt);
    \node at (-0.5,0) {$m_4$};
    \node at (-0.5,1) {$m_3$};
    \node at (-0.5,2) {$m_2$};
    \node at (-0.5,3) {$w_1$};
    \node at (2.5,0) {$w_4$};
    \node at (2.5,1) {$w_3$};
    \node at (2.5,2) {$w_2$};
    \node at (2.5,3) {$w_1$};
    
    \draw[thick] (0,0)  -- (2,0);
    \draw[thick] (0,1) -- (2,1);
    \draw[thick] (0,2) -- (2,2);
    \draw[thick] (0,3) -- (2,3);
  \end{tikzpicture}
  \hspace{2pt}
  \begin{tikzpicture}[scale=0.8]
    \draw[fill=black] (0,0) circle (2pt);
    \draw[fill=black] (0,1) circle (2pt);
    \draw[fill=black] (0,2) circle (2pt);
    \draw[fill=black] (0,3) circle (2pt);
    \draw[fill=black] (2,0) circle (2pt);
    \draw[fill=black] (2,1) circle (2pt);
    \draw[fill=black] (2,2) circle (2pt);
    \draw[fill=black] (2,3) circle (2pt);
    \node at (-0.5,0) {$m_4$};
    \node at (-0.5,1) {$m_3$};
    \node at (-0.5,2) {$m_2$};
    \node at (-0.5,3) {$m_1$};
    \node at (2.5,0) {$w_4$};
    \node at (2.5,1) {$w_3$};
    \node at (2.5,2) {$w_2$};
    \node at (2.5,3) {$w_1$};
    
    \draw[thick] (0,0) -- (2,0);
    \draw[thick] (0,1) -- (2,2);
    \draw[thick] (0,2) -- (2,1);
    \draw[thick] (0,3) -- (2,3);
  \end{tikzpicture}
  \begin{tikzpicture}[scale=0.8]
    \draw[fill=black] (0,0) circle (2pt);
    \draw[fill=black] (0,1) circle (2pt);
    \draw[fill=black] (0,2) circle (2pt);
    \draw[fill=black] (0,3) circle (2pt);
    \draw[fill=black] (2,0) circle (2pt);
    \draw[fill=black] (2,1) circle (2pt);
    \draw[fill=black] (2,2) circle (2pt);
    \draw[fill=black] (2,3) circle (2pt);
    \node at (-0.5,0) {$m_4$};
    \node at (-0.5,1) {$m_3$};
    \node at (-0.5,2) {$m_2$};
    \node at (-0.5,3) {$m_1$};
    \node at (2.5,0) {$w_4$};
    \node at (2.5,1) {$w_3$};
    \node at (2.5,2) {$w_2$};
    \node at (2.5,3) {$w_1$};
    
    \draw[thick] (0,0) -- (2,1);
    \draw[thick] (0,1) -- (2,0);
    \draw[thick] (0,2) -- (2,3);
    \draw[thick] (0,3) -- (2,2);
  \end{tikzpicture}
  \captionof{figure}{A stable marriage instance with multiple collections of disjoint stable matchings: $\{M_0,M_z\}$ and $\{M_1,M_z\}$.}\label{fig:disjoint_stable-list}
\end{center}
\end{figure}
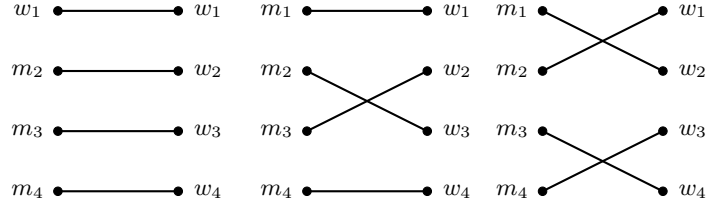

We now give an algorithm to enumerate all such chains with polynomial delay. For the enumeration, we exploit the lattice structure
of stable matchings described in Section~\ref{sec:prelim}.

The $\#P$-hardness of counting all the maximum-length chains can be easily deduced from the $\#P$-hardness of counting all the stable
matchings in a given instance \cite{IL86}. For a given instance $G$, if we construct a new instance $G'$ by adding a new man-woman pair $(m,w)$ such that both prefer each other over all the others, then every stable matching in $G'$ contains the pair $(m,w)$. Hence the length of a maximum-length chain of disjoint stable matchings is $1$, and each stable matching in the given instance is such a chain. 

Algorithm \ref{algo:enum} describes the enumeration procedure. We need some notation and definitions.
Let $A_0$ be the man-optimal matching. Define the set $A = \{A_0, A_1, \dots A_k\}$ such that for $1 \leq i \leq k$, $A_i = \bigvee \{M| A_{i-1}\prec M\}$, that is, $A_i$ is the least upper bound of the set of all the stable matchings which are \emph{strictly dominated} by $A_{i-1}$ 
Similarly, let $B_0$ be the woman-optimal stable matching. Define the set $B = \{B_0, B_1, \dots, B_t \}$ such that for $1 \leq i \leq t$, $B_i=\bigwedge \{M| B_{i-1}\succ M\}$, that is, $B_i$ is the greatest lower bound of the set of all the stable matchings which \emph{strictly dominate} $B_{i-1}$.
We note that $A$ and $B$ are the chains returned by Algorithm~\ref{algorithm:disjoint-SM} with man-proposing and woman-proposing versions respectively. Since both are maximum-length chains of disjoint stable matchings, $t = k$.

Let $X = \{X_0, \cdots  X_k\}$ be a maximum-length chain of disjoint stable matchings i.e. $X_0 \prec X_1 \prec \dots \prec X_k$.
We note the following property of the matchings in $X$.

\begin{lemma}
  \label{lemma:chains}
  For $0 \leq i \leq k$, $A_i \preceq X_i \preceq B_{k-i}$
\end{lemma}
\begin{proof}
  By induction on $i$, we prove $A_i \preceq X_i$ for $0 \leq i \leq k$. Proving $X_i \preceq B_{k-i}$ is analogous.\\
  As $A_0$ is the man-optimal matching, $A_0 \preceq X_0$.
  Assume for some $i$, $A_i \preceq X_i$. Hence $A_i \preceq X_i \prec X_{i+1}$. 
  Therefore $X_{i+1}$ is strictly dominated by $A_i$. Since $A_{i+1}$ is the greatest lower bound of all such stable matchings which are strictly dominated by $A_i$, $A_{i+1} \preceq X_{i+1}$. \qed
\end{proof}

\begin{corollary}\label{cor:prec}
For each $i$, $A_i \preceq B_{k-i}$. Moreover, $\{X_0, \dots, X_{i-1}, X_i, B_{k-i-1}, \dots, B_0\}$ is also a maximum chain of disjoint stable matchings given that $A_j \preceq X_j \preceq B_{k-j}$ for $0\le j \le i$.
\end{corollary}


\emph{Outline of the algorithm: \\} 
An algorithm to enumerating all the stable matchings in a given instance is known in literature ~\cite[Section~3.5]{Gusfield1989}.
We use this result to construct the sub-lattice $L$ of all the stable matchings $N$ which are \emph{in between} two matchings $M$ and $M'$ (i.e. $M\preceq N\preceq M'$), where $M,M'$ are any two stable matchings such that $M\preceq M'$. To construct the sub-lattice $L$, we construct a new instance as follows: 
\begin{enumerate}
\item Delete every woman in $m$'s list better than his partner in $M$ and worse than his partner in $M'$. Delete every man in $w$'s list better than her partner in $M'$ and worse than her partner in $M$.
\item Update the preference list so that $m$ is in $w$'s list iff $w$ is in $m$'s list.  
\end{enumerate} 
In the new instance, $M$ and $M'$ are man-optimal and woman-optimal matchings respectively. The set of stable matchings in this instance is precisely $L$, which can be enumerated by the algorithm for enumeration of stable matchings.

In Algorithm \ref{algo:enum}, we first compute the sublattice $L_0$ between $A_0$ and $B_k$. Then we recursively call Algorithm~\ref{algo:enum} for every $X_0 \in L$. From Corollary~\ref{cor:prec} we know that given a partial list $X_0, X_1 \dots, X_i$ of disjoint stable matchings, we can find the next matching in the chain. The algorithm first finds the man-optimal matching $Y_{i+1}$ after deleting $X_i$ from the given instance. In Algorithm~\ref{algo:enum}, this method is referred to as \textsc{NextBestDisjointMatching}. Then it constructs the sub-lattice $\alpha_{Y_{i+1}}$ between $Y_{i+1}$ and $B_{k-(i+1)}$. Now, for every stable matching $M$ in $\alpha_{Y_{i+1}}$, it appends the input list as $X_0, X_1 \dots, X_i,M$ and recursively calls itself to extend each list further. The correctness of the algorithm can be seen from the fact that it picks exactly one stable
matching from each of the $k$ sublattices, and they are disjoint by construction.

\begin{figure}[H]
	\centering
	\begin{tikzpicture}[scale=1.5]
		\draw[fill=black] (0,0) circle (1pt);
    \node at (0,0.2) {$A_{i+1}$};
		\draw[fill=black] (0,-2) circle (1pt);
    \node at (0,-2.2) {$B_{k-(i+1)}$};
    \node at (1,-1.8) {$L_{i+1}$};
		\draw[fill=black] (1,-1) circle (1pt);
		\draw[fill=black] (-1,-1) circle (1pt);
		\draw[thick] (0,0) -- (1,-1) -- (0,-2) -- (-1,-1) -- (0,0);

    \draw[fill=black] (0,-0.7) circle (1pt);
    \node at (0,-0.5) {$Y_{i+1}$};
    \draw[fill=black] (0.45,-1.35) circle (1pt);
    \draw[fill=black] (-0.45,-1.35) circle (1pt);

    \draw[fill=black] (0,-1.4) circle (1pt);
    \node at (0,-1.2) {$X_{i+1}$};

    \draw[thick,red] (0,-0.7) -- (0.45,-1.35) -- (0,-2) -- (-0.45,-1.35) -- (0,-0.7);

		\draw[fill=black] (-1.5,2) circle (1pt);
    \node at (-1.5,2.2) {$A_i$};
    \node at (-2.5,0.3) {$L_i$};
		\draw[fill=black] (-1.5,0) circle (1pt);
    \node at (-1.5,-0.2) {$B_{k-i}$};
		\draw[fill=black] (-0.5,1) circle (1pt);
		\draw[fill=black] (-2.5,1) circle (1pt);
		\draw[thick] (-1.5,2) -- (-0.5,1) -- (-1.5,0) -- (-2.5,1) -- (-1.5,2);

    \draw[fill=black] (-1.5,1.1) circle (1pt);
    \node at (-1.5,1.3) {$X_i$};

    \draw[->,>=stealth',thick,blue] (-1.5,1.1) to [out=-80,in=170] (0,-0.7);

	\end{tikzpicture}
  \caption{For every matching $X_i$ in the sub-lattice $L_i$, the algorithm finds the next best matching $Y_{i+1}$ in $L_{i+1}$. It then constructs the sub-lattice $\alpha_{Y_{i+1}}$ between $Y_{i+1}$ and $B_{k-(i+1)}$ and appends the input list with every $X_{(i+1)}$ in $\alpha_{Y_{i+1}}$}
\end{figure}
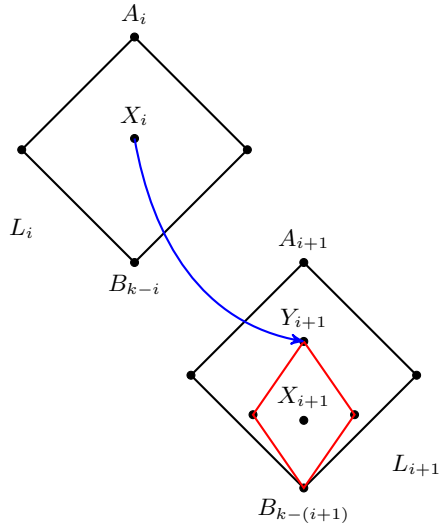

\begin{algorithm}
  \caption{Enumeration($X_0,X_1,\cdots,X_i$)}
  \label{algo:enum}
  \textbf{Input:} A stable matching instance \(G\),\\ 
  \hspace*{\algorithmicindent}                    the output of man-oriented version of Algorithm~\ref{algo:one} $A = \{A_0, A_1, \dots, A_k\}$,\\
  \hspace*{\algorithmicindent}                    the output of women-oriented version of Algorithm~\ref{algo:one} $B = \{B_0, B_1, \dots, B_k\}$ and\\ 
  \hspace*{\algorithmicindent}                    a list $(X_0,\cdots,X_i)$ such that $A_j\preceq X_j \preceq B_{k-j}$ for $0\le j\le i$\\
  \textbf{Output:} Print all maximum size chains of disjoint stable matchings in $G$.\\

  \begin{algorithmic}[1]
    \If {$(X_i \cap B_0 \ne \varnothing)$}
    \State print $(X_0,X_1,\cdots,X_i)$ 

    \State \Return
    \EndIf
    \If {Next[$X_i$] = $\varnothing$} \Comment{Global Memoization}
    \State $Next[X_i] \gets$ \Call{NextBestDisjointMatching}{$X_i$}
    \EndIf
    \State $Y_{i+1}$ $\gets$ Next[$X_i$]
      \If {$S[Y_{i+1}] = \varnothing$} \Comment{Global Memoization}
      \State $S[Y_{i+1}]$ $\gets$ \Call{GetSubLatticeBetween}{$Y_{i+1}$,$B_{k-(i+1)}$}
      \EndIf
    \For{$X_{i+1}$ in $S[Y_{i+1}]$}
      \State \Call{Enumeration}{$X_0,X_1,\cdots,X_i,X_{i+1}$}
    \EndFor

    \State \Return\\

    \Procedure{NextBestDisjointMatching}{$M$}
      \For {every man $m$}
        \State {Delete the first woman $w$ on $m$'s list} \Comment { \(m\)'s partner in \(M\)} \label{line:delete-w1}
        \State {Delete the last man on $w$'s list} \Comment { \(w\)'s partner in \(M\)} \label{line:delete-m1}
      \EndFor
      \State \Return {\Call{GaleShapley}{$M$} \Comment{with modified preference list}}
    \EndProcedure
  \end{algorithmic}
\end{algorithm}

\begin{lemma}\label{lemma:time-enum}
  Algorithm~\ref{algo:enum} terminates in $O(n^3 + n^2(|L| + |P|))$ time, where $P$ is the set of maximum-length chains of disjoint stable matchings and $L$ is the set of all stable matchings featuring in the enumeration.
\end{lemma}
\begin{proof}
If we do not consider the time taken to perform line~6 and line~10, the algorithm takes $O(n)$ time for every longest chain of pairwise disjoint stable matchings. 

Let $L$ be the set of all stable matchings featuring in the enumeration. Let $P$ be the set of all solutions (longest chains of pairwise disjoint stable matchings). 
Every execution of line~6 takes $O(n^2)$ time. Since we remember \textsc{NextBestDisjointMatching}($X_i$), we need to compute line~6 at most $|L|$ times. So, line~6 takes $O(n^2|L|)$ time.\\
Performing line~10 once takes $O(n^2 |S[Y_{i+1}]|)$ time. Hence, the total time spent on line~10 is 
\[
O(n^2 \sum_{\substack{Y=Next[X],\\
X\in L}} |S[Y]|)
\]

Let the summation be equal to $S$. Every stable matching $M$ featuring in $S[Y]$ ($Y$=\textsc{NextBestDisjointMatching}($X_i$)) features in the solution \[(A_0,A_1,\cdots,X,M,B_{k-i},\cdots,B_0)\]
Therefore, as the set mentioned above is unique given $M$,\[S \le |P|+2n\]

Thus, the total time complexity for line~6 to line~10 is \[O(n^2 |L|+ n^2|P| + n^3)\]
Printing the output would take $Max(|L|,|P|)$ time.\qed

\end{proof}

We analyze the number of maximum-length chains of disjoint stable matchings in a random stable matchings instance with complete lists.
Given a natural number $n$, we create a random stable matchings instance of $n$ men and $n$ women by assigning any of the $n!$ possible preference lists to each man and woman uniformly at random. 

\begin{lemma}\label{lem:expected}
The probability of the number of maximum size chains of disjoint stable matchings exceeding $(\frac{n}{\ln n})^{\ln n}$ is at most $O(\frac{(\ln n)^2}{n^2})$.
\end{lemma}
\begin{proof}
  Let $S$ be the random variable denoting the number of stable matchings in a random stable matching instance. Pittel~\cite{Pittel89} showed that 
  $\mathbb{E}[S] = \Theta(n \ln n)$. Thus, there exist non-negative reals $m_1, m_2$ such that $m_1 n \ln n \leq \mathbb{E}[S] \leq m_2 n \ln n$ for sufficiently large $n$. Further, Lennon and Pittel~\cite{LP09} established that $Var(S) = \sigma^2 = O((n \ln n)^2)$. Thus, for sufficiently large $n$, there exists a non-negative real number $c$ such that $Var(S) \leq c^2(n \ln n)^2$.

  Thus, for a parameter $k$, we have 
  \begin{eqnarray*}
  Pr(S \geq m_1 n \ln n + kcn \ln n) & \leq & Pr(S \geq m_1 n \ln n + kcn \ln n \cup S \leq m_2 n \ln n - kcn \ln n) \\
  & \leq & Pr(|S - \mathbb{E}[S]| \geq kcn \ln n) \\
  & \leq & Pr(|S - \mathbb{E}[S]| \geq k\sigma)\\ 
  & \leq & \frac{1}{k^2}
  \end{eqnarray*}
   where the last inequality follows from Chebyshev's inequality. Thus, if $f(k) = m_1 n \ln n + kcn \ln n$, then $Pr(S \geq f(k)) \leq \frac{1}{k^2}$.

  Let $L_0, L_1, \dots, L_{t-1}$ be the sub-lattices constructed in Algorithm~\ref{algo:enum} where $t-1=k$. 
  Let $S_i = |L_i|$ for $0 \leq i \leq k$. Let $p = |P|$,
  the number of maximum-length chains of disjoint stable matchings in the given instance. From Lemma~\ref{lemma:converse}, we have 
  $p \leq \Pi_{i=0}^k S_i \leq (\frac{\sum_{i=0}^t S_i}{t})^t$, where the last inequality follows from the AM-GM inequality. Since $\sum_{i = 0}^k S_i \leq S$, $p \leq (\frac{S}{t})^t$.

  From the above discussions, $Pr(p \geq (\frac{n}{\ln n})^{\ln n})\leq Pr((\frac{S}{t})^t \geq (\frac{n}{\ln n})^{\ln n}) \leq Pr(S \geq n^2) + Pr(t \geq \ln n)$.\\

  Observe that there exists a positive real $m$ such that $f(\frac{n}{m \ln n}) \leq n^2$. Thus, $Pr(S \geq n^2) \leq Pr(S \geq f(\frac{n}{m \ln n})) \leq \frac{m^2 (\ln n)^2}{n^2}$. [Knuth et al $90$] establishes that the probability of some person having more than $\ln n$ stable partners is super-polynomially small. Clearly, no one can have less than $t$ stable partners since each person features alongide a distinct partner in each matching in a maximum size chain of disjoint stable matchings. Hence, $Pr(t \geq \ln n)$ is also super-polynomially small.\\

  Thus, $Pr(p \geq (\frac{n}{\ln n})^{\ln n}) \leq \frac{m_1^2 (\ln n)^2}{n^2}$ for some positive constant $m_1$. Thus, $Pr(p \geq (\frac{n}{\ln n})^{\ln n}) \leq O(\frac{(\ln n)^2}{n^2})$. $\qed$
\end{proof}

\begin{corollary}
  Algorithm~\ref{algo:enum} terminates in $O(n^4 + n^{2 \ln n + 2})$ time with probability $1$ as $n \xrightarrow{} \infty$.
\end{corollary}
\begin{proof}
  As established in the previous lemma (notation carrying over from the proof of the previou lemma), $Pr(S \geq n^2) \leq O(\frac{(\ln n)^2}{n^2})$ and $Pr(p \geq (\frac{n}{\ln n})^{\ln n}) \leq O(\frac{(\ln n)^2}{n^2})$ and hence, a simple union bound returns $Pr(S \geq n^2 \cup p \geq (\frac{n}{\ln n})^{\ln n}) \leq O(\frac{(\ln n)^2}{n^2})$.\\

  Plugging in $S = O(n^2)$ and $p = O(\frac{n}{\ln n})^{\ln n})$ in the run-time of algorithm $1$, algorithm $1$ terminates in $O(n^4 + n^{2 \ln n + 2})$ time with probability $1 - \Omega(\frac{(\ln n)^2}{n^2})$ which tends to $1$ as $n \xrightarrow{} \infty$. $\qed$
\end{proof}

\section{Conclusion}\label{sec:conclusion}
We consider the classical \SM problem and address the question of finding a
largest pairwise disjoint collection of solutions to this problem. We show
that such a collection can in fact be found in time \emph{linear} in the input
size. The collection of stable matchings that our algorithm finds has the
additional property that they form a \emph{chain} in the distributive lattice of
stable matchings. To the best of our knowledge this is the first work on finding
pairwise disjoint \emph{stable} matchings, though this question has received
much attention for bipartite matchings without preferences.

A natural next question is what happens when we allow \emph{small}
intersections between the stable matchings. In particular: is the problem of
finding a collection of \(k\) stable matchings such that no two of them share
more than \emph{one} edge, solvable in polynomial time? Or is this already
\textsf{NP}-hard? Another interesting problem is whether we can find a largest
edge-disjoint collection of stable matchings for the related \textsc{Stable Roommates} problem, in polynomial time.

\bibliography{references} 

\end{document}